\newcommand\tra{^{\mathpalette\raiseT\intercal}}
\newcommand\raiseT[2]{%
\setbox0\hbox{$#1{#2}$}\raise\dp0\box0}
\newtheorem{theorem}{Theorem}
\newtheorem{lemma}{Lemma}
\renewcommand{\vec}{\mathbf}
\renewcommand{\c}{\vec c}
\newcommand{\eps}{\epsilon}
\begin{document}

\title{Inequalities for Standard Model Yukawa Couplings}
\author[a]{Gero von Gersdorff}
\author[a]{and Lucas  Modesto}
\affiliation[a]{Pontifícia Universidade Católica do Rio de Janeiro\\
Rua Marquês de São Vicente 225, Rio de Janeiro, Brazil}
\emailAdd{gersdorff@puc-rio.br}
\emailAdd{lucasmodesto@aluno.puc-rio.br}

\abstract{
We show that the Standard Model Yukawa matrices satisfy a set of simple yet nontrivial inequalities. The relations we derive are independent of the basis used to define the fermion fields, and, amongst other things, place strong constraints on the alignment of the columns of the up-type and down-type Yukawa matrices, as well as their cofactor matrices.  
The reason why one can obtain such strong statements can be traced back to the hierarchical nature of the fermion masses and quark mixings.
The inequalities should be seen as very strong necessary conditions on the SM Yukawa couplings and thus are useful for constraining   flavor models.}

\maketitle

\section{Introduction}

The very hierarchical structure of fermion masses and mixings is one of the unresolved mysteries of the Standard Model (SM).  Disregarding the neutrino sector, the properties of the fermions are encoded in their Yukawa couplings, complex three by three matrices $Y_u$, $Y_d$ and $Y_e$.
In particular, the singular values of these matrices are proportional to the masses, and the unitary Cabibbo-Kobayashi-Maskawa (CKM) matrix appearing in the charged current interactions of the quarks is a product of the basis rotations of the left handed quark fields that diagonalize the Yukawa couplings $Y_u$ and $Y_d$.
Both the quark masses and the CKM matrix elements display a strong hierarchy as summarized in table \ref{tab:data}. 

\begin{table}[h]
\centering
\begin{tabular}{@{}cc c cc c cc c|c cc c cc@{}}
\toprule
$y_u$	& $6.3\times 10^{-6}$	&&$y_d$	& $1.4 \times 10^{-5}$	&&$\theta_{12}$	&$0.23$			&&&
	$y_e$	&$2.8 \times 10^{-6}$	\\
$y_c$	& $3.1\times 10^{-3}$	&& $y_s$	& $2.7 \times 10^{-4}$ 	&& $\theta_{23}$	& $4.2\times 10^{-2}$&&&
	$y_\mu$	&$6.0 \times 10^{-4}$\\
$y_t$	& $0.87$				&& $y_b$	& $1.4 \times 10^{-2}$	&& $\theta_{13}$	& $3.7\times 10^{-3}$&&&	
	$y_\tau$	& $1.0 \times 10^{-2}$\\
\bottomrule
\end{tabular}
\caption{Quark and charged lepton data at 1 TeV in the SM \cite{Antusch:2013jca}. We assume the standard parametrization for the CKM matrix \cite{ParticleDataGroup:2024cfk}.}\label{tab:data}
\end{table}

Over the years, many approaches have been explored to account for this unusual pattern, including Abelian 
 \cite{Froggatt:1978nt}  and non-Abelian \cite{Pakvasa:1977in,Wilczek:1977uh,Barbieri:1995uv,Barbieri:1997tu} flavor symmetries,
 Yukawa textures \cite{Weinberg:1977hb,Fritzsch:1977za},
extra dimensions \cite{Grossman:1999ra,Gherghetta:2000qt,Huber:2000ie,Cabrer:2011qb,vonGersdorff:2013rwa}, 
and clockwork models \cite{Burdman:2012sb,vonGersdorff:2017iym,Patel:2017pct,Alonso:2018bcg,Smolkovic:2019jow,AbreudeSouza:2019ixc,vonGersdorff:2020ods}.

The Yukawa matrices contain a lot of unmeasureable information, related to the fact that the remaining part of the SM Lagrangian is invariant under a large  $\mathrm U(3)^5$ global flavor symmetry that allows one to absorb most of the unitary rotations  diagonalizing the Yukawa matrices.
Therefore, rather than fixing the Yukawa matrices completely, the physical data of table \ref{tab:data}  impose some highly complicated constraints on them (for instance, the singular values are the roots of the characteristic polynomial of $Y^\dagger Y$, which depends in a rather complex way on the $Y_{ij}$).
From a model builder’s perspective, these constraints are not particularly practical -- neither for devising a mechanism behind the peculiar hierarchical structure nor for performing the actual analytical diagonalization needed to fix the model parameters.
The question that we would like to pose in this work is: can we derive some simpler (albeit only necessary) conditions on the Yukawa matrices that any model of flavor must obey in order to reproduce the data in table \ref{tab:data}? The answer to this question is yes, and
these conditions take the form of inequalities that are much simpler than the aforementioned equalities. By 
moving from equalities to inequalities we give up on sufficiency but gain a lot in simplicity. At the same time, some of the inequalities are incredibly sharp, leaving only very small windows (often well below one percent). The technical reason behind this is again the hierarchical nature of the masses and mixings.

\section{Derivation of the Inequalities}
\label{sec:inequalities}

\subsection{Approximation for Singular Value Decompositions}

The basis for the inequalities is a very general approximation scheme  for the singular value 
decomposition (SVD) of arbitrary complex $3\times 3$ matrices \cite{vonGersdorff:2019gle}. 
The approximate expressions for the singular values and the unitary rotations must fall in a certain computable interval containing the exact values, and this interval turns out to be quite small due to the hierarchical structure of the data. 
We will be working in the lowest order approximation (order $n=\frac{1}{2}$ in the language of reference \cite{vonGersdorff:2019gle}), which strikes the best balance between simplicity and accuracy.

Let us start by introducing some notation. We denote the columns  of the up-quark Yukawa matrix by $\c_{u_i}$, i.e.
\be
Y_u=(\c_{u_1}\  \c_{u_2}\  \c_{u_3})\,,
\ee
and similarly for $Y_d$ and $Y_e$.
Moreover, let us denote by $\tilde Y_u$ the cofactor matrix of $Y_u$,
\be
\tilde Y_u\equiv (\c_{u_2}\!\times \c_{u_3} \ \c_{u_3}\!\times \c_{u_1} \ \c_{u_1}\!\times \c_{u_2} )\,.
\ee
The cofactor matrix satisfies $\tilde Y_u\tra Y^{}_u=\det Y_u^{}\one$. We further note that $Y_u$ and 
$\tilde Y_u^*$ are diagonal in the same basis.
Their  SVD's read 
\begin{align}
Y_u^{}={}&U_{u_L}^{}
\mathcal Y_u^{}\label{eq:svd}
\, U_{u_R}^\dagger\,, \\
\tilde Y_u^{}={}&U_{u_L}^{*}
\tilde{\mathcal  Y}_u^{}
\, U_{u_R}\tra e^{i\arg\det Y_u}\,,
\label{eq:svdtilde}
\end{align}
with $\mathcal Y_u=\diag(y_u,y_c,y_t)$ and $\tilde {\mathcal Y}_u=\diag(y_t y_c, y_t y_u, y_c y_u)$ containing the physical Yukawa couplings from table \ref{tab:data} or the respective values at any other scale,
and $U_{u_L}$, $U_{u_R}$ are unitary matrices.

\newcommand{\vtl}{\vec  v_{t_L}}
\newcommand{\vul}{\vec  v_{u_L}}
\newcommand{\vtr}{\vec  v_{t_R}}
\newcommand{\vur}{\vec  v_{u_R}}

\newcommand{\utl}{\vec  u_{t_L}}
\newcommand{\uul}{\vec  u_{u_L}}
\newcommand{\utr}{\vec  u_{t_R}}
\newcommand{\uur}{\vec  u_{u_R}}
\newcommand{\ucl}{\vec  u_{c_L}}
\newcommand{\ucr}{\vec  u_{c_R}}

\newcommand{\vbl}{\vec  v_{b_L}}
\newcommand{\vdl}{\vec  v_{d_L}}
\newcommand{\vbr}{\vec  v_{b_R}}
\newcommand{\vdr}{\vec  v_{d_R}}

\newcommand{\vtaul}{\vec  v_{\tau_L}}
\newcommand{\vtaur}{\vec  v_{\tau_R}}
\newcommand{\vel}{\vec  v_{e_L}}
\newcommand{\ver}{\vec  v_{e_R}}

\newcommand{\ubl}{\vec  u_{b_L}}
\newcommand{\udl}{\vec  u_{d_L}}
\newcommand{\ubr}{\vec  u_{b_R}}
\newcommand{\udr}{\vec  u_{d_R}}
\newcommand{\usl}{\vec  u_{s_L}}
\newcommand{\usr}{\vec  u_{s_R}}

\begin{table}
\centering
\begin{tabular}{cccc}
\toprule
Matrix 	& Longest column 		& Longest row 		& Common element 	\\
\hline
$Y_u$	  	&$\vtl$	&$\vtr^\dagger$	& $y_{tt}$		\\
$\tilde Y_u$	&$\vul^*$	&$\vur\tra$	& $\tilde y_{uu}$\\
\midrule
$Y_d$	  	&$\vbl$	&$\vbr^\dagger$	& $y_{bb}$		\\
$\tilde Y_d$	&$\vdl^*$	&$\vdr\tra$	& $\tilde y_{dd}$\\
\midrule
$Y_e$	  	&$\vtaul$	&$\vtaur^\dagger$	& $y_{\tau\tau}$		\\
$\tilde Y_e$	&$\vel^*$	&$\ver\tra$	& $\tilde y_{ee}$\\
\bottomrule
\end{tabular}
\caption{Notation used for the longest columns and rows of $Y_{u,d,e}$ and $\tilde Y_{u,d,e}$.}\label{tab:defs}
\end{table}

We now  define $\vtl $ to be the longest column vector of $Y_u$, in other words $\vtl$ satisfies
\be
|\vtl |=\max_i |\c_{u_i}|\,.
\ee 
Similarly we define $\vul$ to be the longest column of  $\tilde Y_u^{*}$, $\vtr^\dagger$  the longest row of $Y_u$, and $\vur\tra$ the longest row of $\tilde Y_u$. Finally, the unique component of $Y_u$ ($\tilde Y_u$) that is common to both $\vtl$ and $\vtr^\dagger$ ($\vul^*$ and $\vur\tra$) is denoted by $y_{tt}$ ($\tilde y_{uu}$).
For the down and charged lepton sector we repeat these definitions, they are  summarized in table \ref{tab:defs}. 
Finally,  we define the unit vectors of   table \ref{tab:orthonormal-left}.\footnote{Analogous definitions apply for the right-handed vectors and for the lepton sector, but they will not appear in our main inequalities, so we omit them from table \ref{tab:defs} for clarity.}

\begin{table}
\centering
\begin{tabular}{cc}
\toprule
Sector &  Unit vectors\\
\midrule
$u$ &
$\utl \equiv \vtl/|\vtl|\quad
\uul \equiv \vul/|\vul|\quad
\ucl \equiv \utl^*\times\uul^*$ \\
$d$ &
$\ubl \equiv \vbl/|\vbl|\quad
\udl \equiv \vdl/|\vdl|\quad
\usl \equiv \ubl^*\times\udl^*$ \\
\bottomrule
\end{tabular}
\caption{Notation used for the normalized longest columns of $Y_{u,d}$ and $\tilde Y_{u,d}$.}
\label{tab:orthonormal-left}
\end{table}

Let us, for definiteness, focus on the up sector. As shown in ref.~\cite{vonGersdorff:2019gle} the matrices
\be
 U'_{u_L}\equiv (\uul,\ucl,\utl)\,,
\ee
\be
 U'_{u_R}\equiv (\uur,\ucr,\utr)\,,
\ee
 approximate the  transformations $U_{u_L}$, $U_{u_R}$ of the left-handed quark fields to the mass eigenbasis. 
In order to quantify the deviation of $U_x'$ from $U^{}_x$ we define 
\be
V_{u_L}\equiv  U_{u_L}^\dagger  U'^{}_{u_L}\,,
\label{eq:Vx}
\qquad
V_{u_R}\equiv  U_{u_R}^\dagger  U'^{}_{u_R}\,.
\ee
One can then show that the following inequalities hold: 
\be
\begin{pmatrix}
1-|V_{11}|^2	&	|V_{12}|^2	&	|V_{13}|^2\\
|V_{21}|^2		&	1-|V_{22}|^2&	|V_{23}|^2\\
|V_{31}|^2		&	|V_{32}|^2	&	1-|V_{33}|^2
\end{pmatrix}
\leq \begin{pmatrix}
Z_u(\frac{y_u}{y_c})^{2}	& 	Z'_u(\frac{y_u}{y_c})^{2}	&	X_u(\frac{y_u}{y_t})^{2}\\
Z_u(\frac{y_u}{y_c})^{2}	&	Z'_u(\frac{y_u}{y_c})^{2}+X'_u(\frac{y_c}{y_t})^{2}	&	X_u(\frac{y_c}{y_t})^{2}\\
Z_u(\frac{y_u}{y_t})^{2}	&	X'_u(\frac{y_c}{y_t})^{2}	&	X_u(\frac{y_c}{y_t})^{2}
\end{pmatrix}\,,
\label{eq:master}
\ee
where $V=V_{u_L}$ or $V_{u_R}$, and $X_u$ and $Z_u$ are defined as
\be
X_u\equiv \frac{2y_t^2-y_c^2}{y_t^2-2y_c^2}\,,
\qquad
Z_u\equiv \frac{2y_c^2-y_u^{2}}{y_c^{2}-2y_u^{2}}\,,
\ee
\be
X_u'\equiv \frac{y_t^2}{y_t^2-9y_u^2}X_u\,,
\qquad
Z_u'\equiv \frac{y_t^2}{y_t^2-9y_u^2}Z_u\,.
\ee
which are numerically very close to two.\footnote{With the central values of table \ref{tab:data},
they  range from $Z_u-2=1.2\times 10^{-5}$ to $X_e-2=0.011$.
} 
Analogous relations hold for the down and lepton sectors. The inequalities in eq.~(\ref{eq:master}) are exact, no approximations have been made.
Notice that the  components of the matrix on the right hand side are strongly suppressed, indicating that $U'$ approximates $U$ rather well.

We provide details of the derivation of eq.~(\ref{eq:master})  in appendix \ref{sec:svd}. \footnote{A relation very similar to 
eq.~(\ref{eq:master}) was already derived in \cite{vonGersdorff:2019gle} under the assumption that $U'$ and hence $V$ are exactly unitary, which corresponds to the case $\alpha\neq\gamma$, $\beta\neq\delta$ in the language of appendix \ref{sec:svd}. 
In contrast, eq~(\ref{eq:master}) is valid without this assumption.
}
It makes essential use of the longest column criterion, whose importance can be intuitively understood as follows.  Neglecting all but $y_t$, the matrix $Y^{}_u$ has rank one, and hence 
$Y_u^{}Y_u^\dagger Y^{}_u=y_t^2 Y_u^{}$. Therefore, any of the three columns of $Y_u$ (if nonzero) is an  eigenvector of $Y_u^{}Y_u^\dagger$ to eigenvalue $y_t^2$. 
This is still approximately true if we turn on $y_c$ and $y_u$, as they only provide small corrections to $Y_u$. However if the entries in the chosen column of $Y_u$ in the limit
 $y_c,y_u\to 0$ are small, the corrections are potentially important and the approximation breaks down. This pathological case is avoided by selecting the longest column of $Y_u$, which always remains a rather accurate approximation for the actual eigenvector.

Moreover, the SVD in the form $(Y_uU_{u_R})_{i3}=(U_{u_L})_{i3}y_t$ provides
 three different approximations of $y_t$ (one for each value of $i$) 
\be
y_t=\frac{(Y^{}_uU_{u_R})_{i3}}{(U_{u_L})_{i3}}\approx\frac{(Y^{}_uU'_{u_R})_{i3}}{(U'_{u_L})_{i3}}
\label{eq:ytprime0}
\ee
The most robust approximation (avoiding the pathological case $(U'_{u_L})_{i3}=0$) is obtained by chosing 
$i$ as the index of the longest row of $Y_u$. The right hand side of eq.~(\ref{eq:ytprime0}) can then be rewritten in the form 
\be
 y'_t\equiv \frac{|\vtl ||\vtr|}{|y_{tt}|}\,.
\label{eq:ytprime}
\ee

We can derive inequalities analogous to eq.~(\ref{eq:master}) to bound the errors of $ y'_t$. They 
 are given by  (see appendix \ref{sec:svd} for details of the derivation)
\be
\frac{y_t^2}{y_t+y_c X_u}\leq \frac{|\vtl ||\vtr|}{|y_{tt}|}\leq \frac{y_t^2+y_c^2X_u}{y_t-y_cX_u}\,.
\label{eq:yt}
\ee
A similar relation holds for the SVD of the cofactor matrix 
\be
\frac{(y_cy_t)^{2}}{y_c y_t+y_u y_tZ_u}
\leq
\frac{|\vul||\vur|}{|\tilde y_{uu}|}\leq 
\frac{(y_cy_t)^{2}+(y_u y_t)^{2}Z_u}{y_c y_t-y_u y_tZ_u}\,.
\label{eq:ycyt}
\ee
The relations eq.~(\ref{eq:yt}) and eq.~(\ref{eq:ycyt}) as well as their analogues  for the down and charged lepton sectors give the relations in the upper half of table \ref{tab:3}.
The quotient of the difference of the limits over their sum ranges from 0.4 to 10 \%.

These inequalities  are somewhat nontrivial: from $\tr Y_u^\dagger Y_u^{}=y_t^2+y_c^2+y_u^2$ one concludes that 
 $|\vtl|$, $|\vtr|$  and $y_{tt}$, in the limit of $y_u, y_c\to 0$, satisfy the much weaker separate inequalities $\frac{1}{\sqrt{3}}y_t\leq|\vtl|\leq y_t$, $\frac{1}{\sqrt{3}}y_t\leq|\vtr|\leq y_t$ 
 and $\frac{1}{3}y_t\leq |y_{tt}|\leq y_t $, these bound receive some minor corrections for nonzero $y_{u,c}$.
The particular combination eq.~(\ref{eq:yt}) is however much more tightly constrained.

\begin{table}
\centering
\begin{tabular}{cccc}
\toprule
\makecell{approximated\\ quantity}	& expression	 			              & lower bound 	         & upper bound\\
\midrule
$y_t$					& $\frac{|\vtl ||\vtr|}{|y_{tt}|}$			    & $8.638 \times 10^{-1}$	& $8.762 \times 10^{-1}$	\\
$y_c y_t$				& $\frac{|\vul ||\vur|}{|\tilde y_{uu}|}$	    & $2.686 \times 10^{-3}$	& $2.708 \times 10^{-3}$	\\
$y_b$					& $\frac{|\vbl ||\vbr|}{|y_{bb}|}$			    & $1.345 \times 10^{-2}$	& $1.460 \times 10^{-2}$	\\
$y_s y_b$				& $\frac{|\vdl ||\vdr|}{|\tilde y_{dd}|}$	    & $3.416 \times 10^{-6}$	& $4.252 \times 10^{-6}$	\\
$y_\tau$				& $\frac{|\vtaul ||\vtaur|}{|y_{\tau\tau}|}$	& $8.917 \times 10^{-3}$	& $1.146 \times 10^{-2}$	\\
$y_\mu y_\tau$			& $\frac{|\vel ||\ver|}{|\tilde y_{ee}|}$		& $5.941 \times 10^{-6}$	& $6.060 \times 10^{-6}$	\\
\midrule
$|V_{ud}|$			    & $|\uul^\dagger \udl^{}|$	                & $9.541 \times 10^{-1}$	& $9.916 \times 10^{-1}$	\\
$|V_{us}|$			    & $|\uul^\dagger \usl^{}|$	                & $1.501 \times 10^{-1}$	& $2.999 \times 10^{-1}$	\\
$|V_{ub}|$			    & $|\uul^\dagger \ubl^{}|$	                & $0$			            & $1.145 \times 10^{-2}$	\\
$|V_{cd}|$			    & $|\ucl^\dagger \udl^{}|$	                & $1.501 \times 10^{-1}$	& $2.997 \times 10^{-1}$	\\
$|V_{cs}|$			    & $|\ucl^\dagger \usl^{}|$	                & $9.514 \times 10^{-1}$	& $9.922 \times 10^{-1}$	\\
$|V_{cb}|$			    & $|\ucl^\dagger \ubl^{}|$	                & $1.098 \times 10^{-2}$	& $7.489 \times 10^{-2}$	\\
$|V_{td}|$			    & $|\utl^\dagger \udl^{}|$	                & $3.009 \times 10^{-3}$	& $1.507 \times 10^{-2}$	\\
$|V_{ts}|$			    & $|\utl^\dagger \usl^{}|$	                & $9.097 \times 10^{-3}$	& $7.506 \times 10^{-2}$	\\
$|V_{tb}|$			    & $|\utl^\dagger \ubl^{}|$	                & $9.971 \times 10^{-1}$	& $1$		\\
\bottomrule
\end{tabular}
\caption{Summary of the inequalities.
Under the assumption that $Y_u$, $Y_d$, and $Y_e$ exactly reproduce the data of table \ref{tab:data}, the inequalities
(lower bound) $\leq$ expression $\leq$ (upper bound) hold.
}
\label{tab:3}
\end{table}

\subsection{Inequalities from the CKM Mixing Angles}

Starting from the approximations for the left handed normalized eigenvectors of $Y_u$ and $Y_d$ (see table \ref{tab:orthonormal-left}), we can build an
 approximate CKM matrix:
\be
 V'_{\rm CKM}\equiv 
\begin{pmatrix}
\uul^\dagger \udl^{} & \uul^\dagger \usl^{} & \uul^\dagger \ubl^{} \\
\ucl^\dagger \udl^{} &\ucl^\dagger \usl^{} &\ucl^\dagger \ubl^{} \\
\utl^\dagger \udl^{} &\utl^\dagger \usl^{} &\utl^\dagger \ubl^{}
\end{pmatrix}
=V_{u_L}^\dagger  V^{}_{\rm CKM} V_{d_L}^{}
\label{eq:CKMtilde}\,,
\ee
where $V_{\rm CKM}$ is the physical CKM matrix, and $V_{u_L,d_L}$ were defined in eq.~(\ref{eq:Vx}). From  eq.~(\ref{eq:master}) we can then derive upper and lower bounds for this approximation.
Let us define
\be
\delta_u\equiv
\begin{pmatrix}
Z_u(\frac{y_u}{y_c})^{2}\\
&Z_u(\frac{y_u}{y_c})^{2}+X_u(\frac{y_c}{y_t})^{2}\\
&&X_u(\frac{y_c}{y_t})^{2}
\end{pmatrix}\,,
\qquad
\eps_u\equiv  \begin{pmatrix}
0	& 	\sqrt{Z_u}\frac{y_u}{y_c}	&	\sqrt {X_u}\frac{y_u}{y_t}\\
\sqrt {Z_u}\frac{y_u}{y_c}	&	0	&	\sqrt {X_u}\frac{y_c}{y_t}\\
\sqrt {Z_u}\frac{y_u}{y_t}	&	\sqrt {X_u} \frac{y_c}{y_t}	&	0
\end{pmatrix}\,,
\ee
and similarly for $\delta_d$ and $\eps_d$. In the following we will use the shorthand $|A|$ to denote the matrix whose components are the absolute values of the components of the matrix $A$, i.e., $|A|_{ij}=|A_{ij}|$.
Defining
\be
\Delta_{\rm CKM}\equiv \eps_u\tra |V_{\rm CKM}|+|V_{\rm CKM}|\eps_d^{}+\eps_u\tra |V_{\rm CKM}|\eps^{}_d\,,
\ee
we can easily   
 derive an upper bound\footnote{The min and max functions in eq.~(\ref{eq:ckmup}) and  (\ref{eq:ckmlow}) apply component-wise.}
 \be
| V'_{\rm CKM}|\leq 
	\min\left(1,\, |V_{\rm CKM}|+\Delta_{\rm CKM}\right)
	\equiv  V_{\rm CKM}'^+\,,
\label{eq:ckmup}
\ee
and a lower bound 
\be
| V'_{\rm CKM}|\geq \max\, (0,\,\sqrt{\one-\delta_u}|V_{\rm CKM}|\sqrt{\one-\delta_d}
	-\Delta_{\rm CKM})
\equiv  V_{\rm CKM}'^-\,.
\label{eq:ckmlow}
\ee
In deriving eqns.~(\ref{eq:ckmup}) and (\ref{eq:ckmlow})  we have used the triangle inequality, eq.~(\ref{eq:master}), as well as $0\leq |W_{ij}|\leq 1$  valid for any unitary matrix $W$.
Note that $ V'_{\rm CKM}$ is expressed in terms of the Yukawa matrices via eq.~(\ref{eq:CKMtilde}) and $V_{\rm CKM}'^\pm$ only depend on experimental data and can be computed. 
Plugging in the central values given in table \ref{tab:data}, we obtain  
 the lower half of table \ref{tab:3}.
%
%
%

One of the simplest  and at the same time most stringent relations 
 comes from  $V_{tb}$:
\be
0.9971\leq |\utl^\dagger \ubl^{}|\leq 1\,,
\label{eq:utub}
\ee
It essentially says that the longest columns of $Y_u$  and $Y_d$ have to be aligned with an accuracy of only 0.3\%.
Next, we have the relations from $V_{ub}$ and $V_{td}$ respectively
\begin{align}
0\leq |\uul^\dagger \ubl^{}|\leq 0.01145\,,\\
0.003009\leq |\utl^\dagger \udl^{}|\leq 0.01507\,,
\label{eq:utud}
\end{align}
which state that the longest columns of $Y_d$ and $\tilde Y_u^*$ (or $Y_u$ and $\tilde Y_d^*$ repectively) should be orthogonal within 1\%. Additionally, in the case of $|\utl^\dagger \udl^{}|$ there is a nontrivial lower bound as well.
Similar to eq.~(\ref{eq:utub}) we also have 
\be
0.9541\leq |\uul^\dagger \udl^{}|\leq 0.9916\,,
\label{eq:utub}
\ee
which quantifies the alignment between the longest columns of $\tilde Y_u$ and $\tilde Y_d$.
Other equations are less constrained and also slightly more complex (for instance, they involve the second-generation approximate eigenvectors, which are already cubic in the original Yukawa couplings).

\section{Numerical Checks}

In order to cross check the inequalities of the previous section 
 we have generated random Yukawa couplings as follows. In order to keep the data of table \ref{tab:data} fixed, we sampled\footnote{We draw these matrices from the invariant Haar measure for $\operatorname{ SU}(3)$.} unitary matrices $U_{u_R}$, $U_{d_L}$ and $U_{d_R}$, while setting $U_{u_L}=U_{d_L}V^\dagger_{\rm CKM}$ in order to fix the CKM matrix.
From the matrices $Y_u$ and  $Y_d$ obtained in this way we compute the quantities in the second column of table \ref{tab:3}.
The results are plotted in figures \ref{fig:svs} and \ref{fig:ckm}. We sampled the Yukawa matrices one million times and observe no violation of the inequalities.

\begin{figure}[h]
\centering
\includegraphics[width=7.5cm]{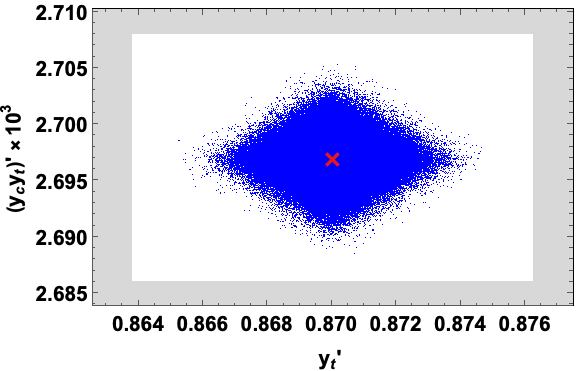}
\hfill
\includegraphics[width=7.5cm]{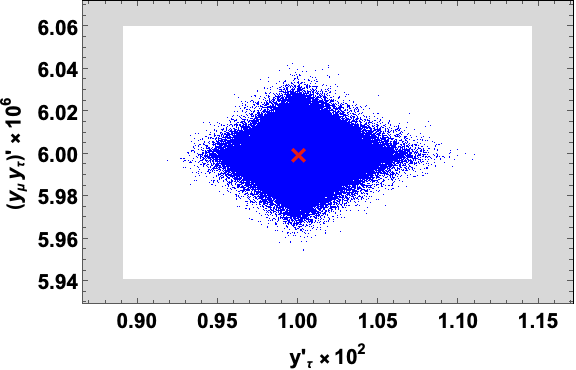}
\caption{Left panel: simulation of the inequalities in rows 1 and 2  of table \ref{tab:3}. Right panel: same for  rows 5 and 6. The gray area is  forbidden by the inequalities, and the red crosses correspond to the physical values.}
\label{fig:svs}
\end{figure}

\begin{figure}[h]
\centering
\includegraphics[width=7.5cm]{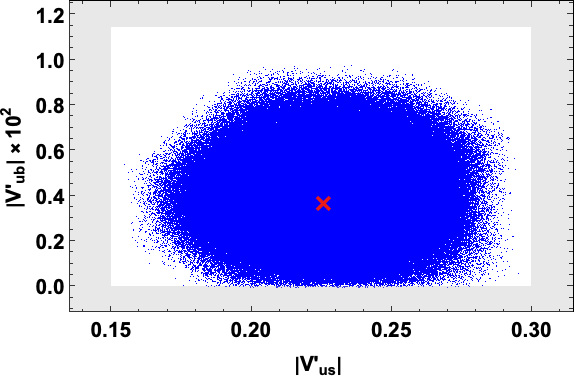}
\hfill
\includegraphics[width=7.5cm]{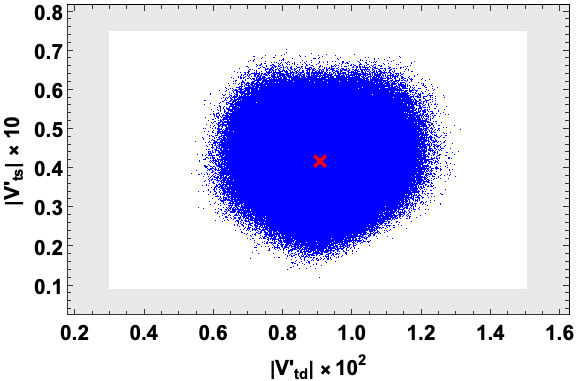}
\\
\vspace{1cm}
\includegraphics[width=7.5cm]{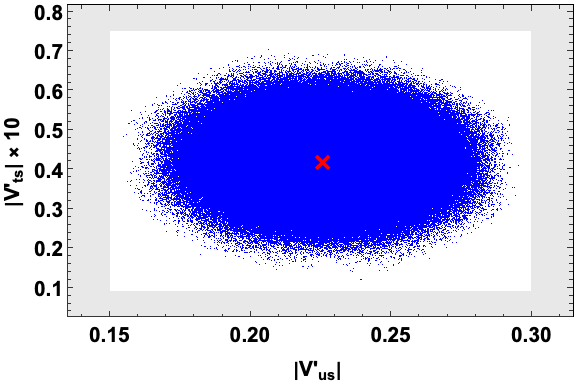}
\hfill
\includegraphics[width=7.5cm]{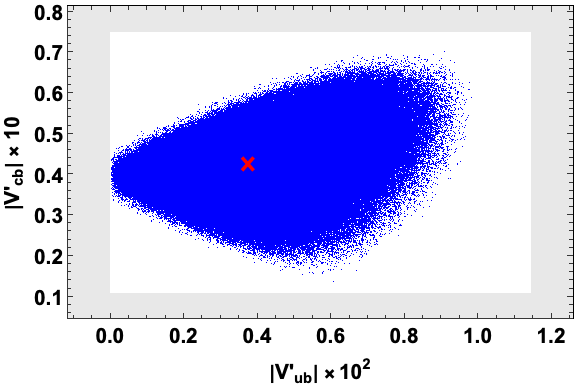}
\caption{Simulation of various inequalities in the lower part of table \ref{tab:3}. The gray area is  forbidden by the inequalities, and the red crosses correspond to the physical values.}
\label{fig:ckm}
\end{figure}

Notice that it appears from the plots that the inequalities are not completely optimal, as the bounds are not always reached. Partially, this is only a spurious effect due to the low probability densities near the bounds. For instance, in order to saturate the inequalities (\ref{eq:yt}), it is necessary that both inequalities in Lemma \ref{lem:basic} are saturated.
This in turn requires both $U_{u_L}$ and $U_{u_R}$  to be very close to 
"tribimaximality" ($|U_{u_L\,i3}|^2=\frac{1}{3}$, $|U_{u_L\,\alpha 2}|^2=\frac{2}{3}$ and $|U_{u_L\,\alpha 1}|=0$, where $i=1,2,3$ and $\alpha$ is defined in Lemma \ref{lem:basic}), and similarly for $U_{u_R}$ with $\alpha\to\beta$. We have checked that close to this point the inequalities for $y_t'$ can indeed be saturated, while even with one million simulations it is unlikely that a randomly drawn sample comes close to this point.

On the other hand, the inequalities for the CKM angles are likely not completely optimal, as this would require that the inequalities of Lemmas \ref{lem:basic} and \ref{lem:basic2} are simultaneously saturated, which is not possible. Still it does appear that only the bounds on $(V'_{\rm CKM})_{td}$ and the upper bound on $(V'_{\rm CKM})_{ub}$ can be notably improved.

\section{Conclusions}

We have derived strong inequalities for certain combinations of the SM Yukawa matrix elements that any model of flavor must necessarily obey. The inequalities are compiled in table \ref{tab:3}; see also tables \ref{tab:defs} and \ref{tab:orthonormal-left} for a summary of the notation. 

We stress once more that these relations are completely independent of the basis used to define the quark and lepton fields, they are true for any set of Yukawa matrices $Y_u$, $Y_d$, $Y_e$ reproducing the data in table \ref{tab:data}. The bounds are valid for the couplings at 1 TeV, for any other scale the corresponding intervals can be derived using the respective data in eqns.~(\ref{eq:yt}),  (\ref{eq:ycyt}),  (\ref{eq:ckmup}), and (\ref{eq:ckmlow}).

We could go to higher orders in the approximation scheme of ref.~\cite{vonGersdorff:2019gle}. This would result in bounds on columns/rows of $YY^\dagger$, $YY^\dagger Y$ etc., with corresponding substitutions of $y_i$ by $y_i^2$, $y_i^3$ etc. in all bounds. Even though the intervals  shrink extremely rapidly to the true masses and CKM mixing angles, the price to pay is that  the approximate expressions for them become more complex functions of the Yukawa couplings. We believe that the lowest order approximation  strikes the best balance between simplicity and accuracy.

The list of inequalities derived in this work is not exhaustive. There are also constraints on the second longest columns/rows and inequalities based on slightly different approximations for the singular values. 
We leave the derivation of these constraints to future work.

\section*{Acknowledgements}
GG acknowledges financial support by the Conselho Nacional de Desenvolvimento
Científico e Tecnológico (CNPq) under fellowship number 313238/2023-5, as well as 
the Funda\char"00E7ão de Amparo à Pesquisa do Estado do Rio de Janeiro (FAPERJ) under 
project number 210.785/2024. LM is supported by Coordena\char"00E7ão de Aperfei\char"00E7oamento de Pessoal de Nível Superior (CAPES).

\appendix

\section{Inequalities from Singular Value Decompositions}
\label{sec:svd}

\begin{lemma}
\label{lem:basic}
Let Y be a complex 3 by 3 matrix, and let $Y=U_L^{}\diag(y_1,y_2,y_3)U_R^\dagger$ be its SVD with $y_1\leq y_2\leq y_3$. 
Furthermore, let the $\alpha^{\rm th}$ row ($\beta^{\rm th}$ column) be the longest row (column) of $Y$.  Then the following inequalities hold:
\be
|U_{L\alpha 3}|^2\geq  \frac{1}{X+1}\,,\qquad \frac{|U_{L\alpha 1}|^2}{|U_{L\alpha 3}|^2}+\frac{|U_{L\alpha 2}|^2}{|U_{L\alpha 3}|^2}\leq X\,,
\label{eq:basicL3}
\ee
\be
|U_{R\beta 3}|^2\geq \frac{1}{X+1} \,, \qquad \frac{|U_{R\beta 1}|^2}{|U_{R\beta 3}|^2}+\frac{|U_{R\beta 2}|^2}{|U_{R\beta 3}|^2}\leq X\,,
\label{eq:basicR3}
\ee
where $X\equiv\frac{2y_3^2-y_2^2}{y_3^2-2y_2^2}$.\footnote{We also define $X\equiv\infty$ if $y_3^2< 2 y_2^2$.}
\end{lemma}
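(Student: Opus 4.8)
The plan is to work with the Hermitian matrix $A\equiv YY^\dagger=U_L\diag(y_1^2,y_2^2,y_3^2)U_L^\dagger$, whose diagonal entries $A_{ii}=\sum_k y_k^2|U_{Lik}|^2$ are precisely the squared lengths of the rows of $Y$; by hypothesis $A_{\alpha\alpha}$ is the largest of the three. Writing $p_k\equiv|U_{Lik}|^2$ evaluated at $i=\alpha$ and $t\equiv|U_{L\alpha3}|^2$, unitarity of $U_L$ gives $p_1+p_2+p_3=1$, so the two inequalities in (\ref{eq:basicL3}) are one and the same: for $t>0$ and finite $X$ one has $\frac{|U_{L\alpha1}|^2+|U_{L\alpha2}|^2}{|U_{L\alpha3}|^2}=\frac{1-t}{t}\le X$ if and only if $t\ge\frac1{X+1}$. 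Hence it suffices to establish the single bound $t\ge\frac1{X+1}$, and it helps to record at the outset the algebraic identity $\frac1{X+1}=\frac{y_3^2-2y_2^2}{3(y_3^2-y_2^2)}$.

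The three ingredients I would combine are the following. First, an upper bound on the longest row, $A_{\alpha\alpha}=\sum_k y_k^2 p_k\le y_2^2(1-t)+y_3^2 t$, which follows from $y_1^2\le y_2^2$ together with $p_1+p_2=1-t$. Second, a lower bound on the sum of the two remaining rows: since every \emph{column} of $U_L$ is a unit vector, $\sum_{i\neq\alpha}|U_{Lik}|^2=1-p_k$, so $\sum_{i\neq\alpha}A_{ii}=\sum_k y_k^2(1-p_k)$; substituting $1-p_1=p_2+t$ and $1-p_2=p_1+t$ and using $y_1^2\le y_2^2$ to drop a manifestly nonnegative term gives $\sum_{i\neq\alpha}A_{ii}\ge y_1^2+y_2^2 t+y_3^2(1-t)$. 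Third, the defining property of $\alpha$ yields $A_{\alpha\alpha}\ge A_{ii}$ for the other two rows, hence $2A_{\alpha\alpha}\ge\sum_{i\neq\alpha}A_{ii}$.

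Chaining these three facts gives $2\bigl[y_2^2(1-t)+y_3^2 t\bigr]\ge y_1^2+y_2^2 t+y_3^2(1-t)$, a single linear inequality in $t$. Solving it produces $t\ge\frac{y_1^2+y_3^2-2y_2^2}{3(y_3^2-y_2^2)}$, and discarding the $y_1^2\ge0$ contribution (which only weakens the bound but makes it independent of $y_1$) yields exactly $t\ge\frac{y_3^2-2y_2^2}{3(y_3^2-y_2^2)}=\frac1{X+1}$, as claimed. The degenerate regimes are disposed of at the end: if $y_3^2\le 2y_2^2$ then $X\equiv\infty$ and the assertion reduces to the trivial $t\ge0$, while the division by $y_3^2-y_2^2$ is legitimate precisely when the bound is nontrivial (since $y_3^2>2y_2^2$ forces $y_3^2>y_2^2$). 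The inequalities (\ref{eq:basicR3}) for $U_R$ follow verbatim upon applying the same argument to $Y^\dagger Y=U_R\diag(y_1^2,y_2^2,y_3^2)U_R^\dagger$, whose diagonal entries are the squared column lengths and for which $\beta$ is the longest column.

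I expect the main obstacle to be the second ingredient. The naive combination of $2A_{\alpha\alpha}\ge\sum_{i\neq\alpha}A_{ii}$ with $\tr A=\sum_k y_k^2$ collapses to the far weaker average bound $A_{\alpha\alpha}\ge\tfrac13\tr A$ and loses all dependence on $t$. The essential step is therefore to feed in the unit-norm condition on the \emph{third column} of $U_L$ through the $t$-dependent lower bound on $\sum_{i\neq\alpha}A_{ii}$, which is what converts the longest-row hypothesis into a genuine constraint on the overlap $|U_{L\alpha3}|^2$ of the $\alpha$-th coordinate axis with the top singular direction.
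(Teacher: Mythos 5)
Your proof is correct and takes essentially the same route as the paper: your chain of ingredients ($2A_{\alpha\alpha}\geq\sum_{i\neq\alpha}A_{ii}$ plus the $t$-dependent upper/lower bounds from unitarity and $y_1\leq y_2$) is algebraically equivalent to the paper's two-line argument $\tfrac13\tr(YY^\dagger)\leq (YY^\dagger)_{\alpha\alpha}\leq y_2^2(1-t)+y_3^2t$, and both collapse to the identical linear inequality $3t(y_3^2-y_2^2)\geq y_1^2+y_3^2-2y_2^2$ before discarding $y_1^2$. The only cosmetic difference is that the paper uses the trace exactly ("longest row $\geq$ average") where you bound the complementary row sum from below, which costs you one extra dropped term but changes nothing in the result.
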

\begin{proof}
The length squared of the $j^{\rm th}$ row  is given by $(Y Y^\dagger)_{jj}$. The lengths squared of the three rows sum to $\tr YY^\dagger=y_1^2+y_2^2+y_3^2$, and hence 
\be
\frac{1}{3}(y_3^2+y_2^2+y_1^2)\leq (Y Y^\dagger)_{\alpha\alpha}= |U^{}_{L\,\alpha i}|^2y_i^2\leq
\left(|U_{L\,\alpha 1}^{}|^2+|U_{L\,\alpha 2}^{}|^2\right)y_2^2+|U_{L\,\alpha 3}^{}|^2y_3^2\,,
\ee
where in the last step we used $y_1\leq y_2$.
Unitarity of the $\alpha^{\rm th}$ row of $U_L$ then proves eq.(\ref{eq:basicL3}), 
\be
|U_{L\,\alpha 3}|^2\geq \frac{y_3^2-2y_2^2+y_1^2}{3(y_3^2-y_2^2)}\geq \frac{y_3^2-2y_2^2}{3(y_3^2-y_2^2)}=\frac{1}{X+1}\,.
\ee
 The steps to prove eq.~\ref{eq:basicR3} are identical.
\end{proof}
These inequalities are valid for any singular values, but they become the most restrictive in the case of strongly hierarchical ones, in which case
$X\approx 2$. There is an analogous Lemma for the longest rows and columns of the cofactor matrix:

\begin{lemma}
\label{lem:basic2}
Let Y be a complex 3 by 3 matrix,  let $Y=U_L^{}\diag(y_1,y_2,y_3)U_R^\dagger$ be its SVD with $y_1\leq y_2\leq y_3$, and let $\tilde Y$ be the cofactor matrix of $Y$. Furthermore, let $\gamma$ and $\delta$ index the longest row and column of $\tilde Y$ respectively. Then the following inequalities hold:
\be
|U_{L\gamma 1}|^2\geq  \frac{1}{Z+1}\,,\qquad \frac{|U_{L\gamma 2}|^2}{|U_{L\gamma 1}|^2}+\frac{|U_{L\gamma 3}|^2}{|U_{L\gamma 1}|^2}\leq Z\,,
\label{eq:basicL3tilde}
\ee
\be
|U_{R\delta 1}|^2\geq \frac{1}{Z+1} \,, \qquad \frac{|U_{R\delta 2}|^2}{|U_{R\delta 1}|^2}+\frac{|U_{R\delta 3}|^2}{|U_{R\delta 1}|^2}\leq Z\,,
\label{eq:basicR3tilde}
\ee
where $Z\equiv \frac{2y_2^{2}-y_1^2}{y_2^2-2y_1^2}$.
\end{lemma}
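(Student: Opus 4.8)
The plan is to prove Lemma \ref{lem:basic2} by applying Lemma \ref{lem:basic} directly to the cofactor matrix $\tilde Y$, viewed as a generic complex $3\times3$ matrix. The starting point is the SVD of the cofactor given in eq.~(\ref{eq:svdtilde}): since $Y=U_L^{}\diag(y_1,y_2,y_3)U_R^\dagger$, one has $\tilde Y=U_L^{*}\,\diag(y_2y_3,\,y_1y_3,\,y_1y_2)\,U_R\tra\,e^{i\arg\det Y}$. Reading this off as a standard SVD $\tilde Y=A\tilde{\mathcal Y}B^\dagger$, the left and right singular vectors are the columns of $A=U_L^*$ and $B=e^{-i\arg\det Y}U_R^*$, and the three singular values of $\tilde Y$ are the pairwise products $\{y_1y_2,\,y_1y_3,\,y_2y_3\}$.

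First I would order these singular values increasingly, $y_1y_2\le y_1y_3\le y_2y_3$, so that the hypotheses of Lemma \ref{lem:basic} are met. The crucial observation is that the $X$-parameter which Lemma \ref{lem:basic} builds out of the two \emph{largest} singular values of $\tilde Y$ simplifies dramatically, because the common factor $y_3^2$ cancels between numerator and denominator,
\be
\tilde X=\frac{2(y_2y_3)^2-(y_1y_3)^2}{(y_2y_3)^2-2(y_1y_3)^2}
=\frac{2y_2^2-y_1^2}{y_2^2-2y_1^2}=Z\,,
\ee
reproducing exactly the quantity $Z$ in the statement. With $\gamma$ and $\delta$ the longest row and column of $\tilde Y$, Lemma \ref{lem:basic} then immediately bounds the components of $A$ and $B$ in the row/column corresponding to the largest singular value $y_2y_3$.

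The one point that needs care is the bookkeeping of column indices. Because eq.~(\ref{eq:svdtilde}) lists the singular values of $\tilde Y$ in \emph{decreasing} rather than increasing order, the largest singular value $y_2y_3$ sits in the first column of $A=U_L^*$ (and of $B$), not the third. Consequently the ``largest singular value'' bound of Lemma \ref{lem:basic}, which there constrains the third column, here lands on the first column: $|A_{\gamma 1}|^2=|U_{L\gamma1}|^2$ and $|B_{\delta 1}|^2=|U_{R\delta1}|^2$. Likewise the two subleading singular values $y_1y_3$ and $y_1y_2$ occupy columns $2$ and $3$, so the ratio inequality of Lemma \ref{lem:basic} becomes $(|U_{L\gamma2}|^2+|U_{L\gamma3}|^2)/|U_{L\gamma1}|^2\le Z$, and analogously for $U_R$. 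The complex conjugation in $A=U_L^*$ and the phase in $B$ are harmless, since Lemma \ref{lem:basic} involves only the moduli of the matrix entries. Collecting these gives precisely eqns.~(\ref{eq:basicL3tilde}) and (\ref{eq:basicR3tilde}).

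The main obstacle, such as it is, is purely this index/conjugation tracking; there is no new analytic input beyond Lemma \ref{lem:basic}, because the cofactor's SVD is already supplied by eq.~(\ref{eq:svdtilde}). Alternatively, one could bypass the reference to eq.~(\ref{eq:svdtilde}) and simply repeat the proof of Lemma \ref{lem:basic} verbatim for $\tilde Y$, using that the row lengths squared of $\tilde Y$ are $(\tilde Y\tilde Y^\dagger)_{jj}$, that they sum to $\tr\tilde Y\tilde Y^\dagger=(y_1y_2)^2+(y_1y_3)^2+(y_2y_3)^2$, and that bounding the longest row with the two largest singular values $y_2y_3$ and $y_1y_3$ yields the same result after the $y_3^2$ cancellation.
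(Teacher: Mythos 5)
Your proposal is correct and is essentially the paper's own argument: the paper likewise proves Lemma \ref{lem:basic2} by running the Lemma \ref{lem:basic} argument on $\tilde Y$ using its SVD in eq.~(\ref{eq:svdtilde}), where the cancellation of $y_3^2$ turns the $X$-parameter into $Z$. Your explicit tracking of the index reordering and the harmless conjugation/phase just spells out details the paper leaves implicit.
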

\begin{proof}
The proof is identical to the one of Lemma \ref{lem:basic}, making use of the SVD of $\tilde Y$ in eq.~(\ref{eq:svdtilde}).
\end{proof}

\begin{theorem}
\label{thm:sv3}
Let Y be a complex 3 by 3 matrix, $Y=U_L^{}\diag(y_1,y_2,y_3)U_R^\dagger$ be its SVD with $y_1\leq y_2\leq y_3$, 
and let the $\alpha^{\rm th}$ row ($\beta^{\rm th}$ column) be the longest row (column) of $Y$.  Then the following inequalities hold
\be
\frac{y_3^2}{y_3+Xy_2}\leq \frac{[(YY^\dagger)_{\alpha\alpha}(Y^\dagger Y)_{\beta\beta}]^\frac{1}{2}}{|Y_{\alpha\beta}|}
\leq \frac{y_3^2+Xy_2^2}{y_3-Xy_2}\,,
\label{eq:thmyt}
\ee
where $X\equiv\frac{2y_3^2-y_2^2}{y_3^2-2y_2^2}$.
\end{theorem}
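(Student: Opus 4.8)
The plan is to reduce everything to the two diagonal entries $(YY^\dagger)_{\alpha\alpha}$ and $(Y^\dagger Y)_{\beta\beta}$ and the single entry $Y_{\alpha\beta}$, all expressed through the $\alpha^{\rm th}$ row of $U_L$ and the $\beta^{\rm th}$ row of $U_R$. Writing $a_i\equiv|U_{L\alpha i}|^2$, $b_i\equiv|U_{R\beta i}|^2$ and $c_i\equiv U_{L\alpha i}\bar U_{R\beta i}$ (so that $|c_i|^2=a_ib_i$), I have
\[
(YY^\dagger)_{\alpha\alpha}=\sum_i a_i y_i^2,\qquad
(Y^\dagger Y)_{\beta\beta}=\sum_i b_i y_i^2,\qquad
Y_{\alpha\beta}=\sum_i c_i\,y_i .
\]
The content of Lemma \ref{lem:basic} is precisely that $t_a\equiv(a_1+a_2)/a_3\leq X$ and $t_b\equiv(b_1+b_2)/b_3\leq X$, with $a_1+a_2=t_aa_3$ and $b_1+b_2=t_bb_3$. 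The single observation that makes the estimate collapse is that every bound below carries an overall factor $\sqrt{a_3b_3}$ which cancels between numerator and denominator, leaving a quantity depending only on $t_a,t_b\in[0,X]$.

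For the upper bound I want the two diagonal entries large and $|Y_{\alpha\beta}|$ small. Using $y_1\leq y_2$ gives $(YY^\dagger)_{\alpha\alpha}\leq(a_1+a_2)y_2^2+a_3y_3^2=a_3(y_3^2+t_ay_2^2)$, and likewise for the column. For the denominator I isolate the $i=3$ term and combine the reverse triangle inequality with Cauchy--Schwarz,
\[
|Y_{\alpha\beta}|\geq|c_3|y_3-(|c_1|+|c_2|)y_2\geq\sqrt{a_3b_3}\,y_3-\sqrt{(a_1+a_2)(b_1+b_2)}\,y_2=\sqrt{a_3b_3}\bigl(y_3-\sqrt{t_at_b}\,y_2\bigr),
\]
where $y_1\leq y_2$ was again used to pull $y_2$ out of the subleading terms. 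Once $\sqrt{a_3b_3}$ cancels, the ratio is bounded by $g(t_a,t_b)\equiv\sqrt{(y_3^2+t_ay_2^2)(y_3^2+t_by_2^2)}\big/(y_3-\sqrt{t_at_b}\,y_2)$. Since the numerator of $g$ is increasing and its denominator decreasing in each argument, the maximum over $[0,X]^2$ is attained at the corner $t_a=t_b=X$, which evaluates to $(y_3^2+Xy_2^2)/(y_3-Xy_2)$.

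For the lower bound I reverse both estimates. Dropping the nonnegative first two terms gives $(YY^\dagger)_{\alpha\alpha}\geq a_3y_3^2$ and $(Y^\dagger Y)_{\beta\beta}\geq b_3y_3^2$, while the ordinary triangle inequality together with the same Cauchy--Schwarz step yields $|Y_{\alpha\beta}|\leq\sqrt{a_3b_3}\bigl(y_3+\sqrt{t_at_b}\,y_2\bigr)$. The factor $\sqrt{a_3b_3}$ cancels once more, leaving $y_3^2/(y_3+\sqrt{t_at_b}\,y_2)$, which is minimized by taking $\sqrt{t_at_b}$ as large as allowed, i.e.\ $\sqrt{t_at_b}=X$, producing $y_3^2/(y_3+Xy_2)$.

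The one genuine subtlety — and the point where the hierarchy is essential — is the positivity of the denominator in the upper bound: the estimate $|Y_{\alpha\beta}|\geq\sqrt{a_3b_3}(y_3-\sqrt{t_at_b}\,y_2)$ is only useful when $y_3>Xy_2$, which is also exactly the condition for the claimed upper bound to be positive. This holds comfortably for the strongly hierarchical singular values of table \ref{tab:data}, and I would record it as the hypothesis under which the upper bound is asserted (note that the lower bound holds verbatim regardless, since $y_3+\sqrt{t_at_b}\,y_2>0$ always and only $\sqrt{t_at_b}\leq X$ is used). The monotonicity of $g$ on the compact square and the bookkeeping of the reverse-triangle/Cauchy--Schwarz step are the only steps requiring real care; everything else is the direct substitution sketched above.
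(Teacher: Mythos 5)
Your proof is correct and follows essentially the same route as the paper's: both cancel the third components between numerator and denominator (your $\sqrt{a_3b_3}$, the paper's $|U_{L\,\alpha 3}|^2|U_{R\,\beta 3}|^2y_3^2$), bound the diagonal entries using $y_1\le y_2$ together with Lemma \ref{lem:basic}, and control $|Y_{\alpha\beta}|$ with the triangle inequality plus a Cauchy--Schwarz/constrained-maximization step over the constraints of that lemma. The only substantive difference is that you explicitly flag the hypothesis $y_3>Xy_2$ needed for the upper bound to be meaningful, a condition the paper leaves implicit (its denominator bound $1-Xy_2/y_3$ would otherwise be nonpositive); this is a fair observation but does not alter the argument in the hierarchical regime the theorem is used in.
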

\begin{proof}
Writing the expression in eq.~(\ref{eq:thmyt}) in terms of the SVD of $Y$
\be
\frac{[(YY^\dagger)_{\alpha\alpha}(Y^\dagger Y)_{\beta\beta}]}{|Y_{\alpha\beta}|^2}
=y_3^2\frac{\left(1+\frac{| U_{L\,\alpha 2}|^2}{|U_{L\,\alpha 3}|^2}\frac{y_2^2}{y_3^2}+\frac{| U_{L\,\alpha 1}|^2}{|U_{L\,\alpha 3}|^2}\frac{y_1^2}{y_3^2}\right)\left(1+\frac{| U_{R\,\beta 2}|^2}{|U_{R\,\beta 3}|^2}\frac{y_2^2}{y_3^2}+\frac{| U_{R\,\beta 1}|^2}{|U_{R\,\beta 3}|^2}\frac{y_1^2}{y_3^2}\right)}
{\left|
1+\frac{U_{L\, \alpha 2}U^*_{R\,\beta 2}}{U_{L\, \alpha 3}U^*_{R\,\beta 3}}\frac{y_2}{y_3}
 +\frac{U_{L\, \alpha 1}U^*_{R\,\beta 1}}{U_{L\, \alpha 3}U^*_{R\,\beta 3}}\frac{y_1}{y_3}\right|^2}\,,
\ee
we can straightforwardly apply  Lemma \ref{lem:basic} to get the following bounds for the numerator terms
\be
1\leq 
1+\frac{| U_{L\,\alpha 2}|^2}{|U_{L\,\alpha 3}|^2}\frac{y_2^2}{y_3^2}\
+\frac{| U_{L\,\alpha 1}|^2}{|U_{L\,\alpha 3}|^2}\frac{y_1^2}{y_3^2}\\
\leq
1+\frac{y_2^2}{y_3^2}X\,,
\ee
\be
1\leq
1+\frac{| U_{R\,\beta 2}|^2}{|U_{R\,\beta 3}|^2}\frac{y_2^2}{y_3^2}+\frac{| U_{R\,\beta 1}|^2}{|U_{R\,\beta 3}|^2}\frac{y_1^2}{y_3^2}
\leq
1+\frac{y_2^2}{y_3^2}X\,.
\ee
For the  denominator we use the triangle inequalities ($\left||a|-|b|\right|\leq|a+b|\leq |a|+|b|$ for any complex numbers $a,b$) and maximize/minimize the resulting expressions under the conditions eq.~(\ref{eq:basicL3}) and eq.~(\ref{eq:basicR3}) to get
\be
1-\frac{y_2}{y_3}X
\leq
\left|
1+\frac{U^{}_{L\, \alpha 2}U^*_{R\,\beta 2}}{U^{}_{L\, \alpha 3}U^*_{R\,\beta 3}}\frac{y_2}{y_3}
 +\frac{U^{}_{L\, \alpha 1}U^*_{R\,\beta 1}}{U^{}_{L\, \alpha 3}U^*_{R\,\beta 3}}\frac{y_1}{y_3}\right|
 \leq 1+\frac{y_2}{y_3}X\,.
\ee
Combining all these inequalities we straightforwardly arrive at eq.~(\ref{eq:thmyt}).
\end{proof}

The analogous Theorem for the cofactor matrix reads
\begin{theorem}
Let Y be a complex 3 by 3 matrix,  let $Y=U_L^{}\diag(y_1,y_2,y_3)U_R^\dagger$ be its SVD with $y_1\leq y_2\leq y_3$, and let $\tilde Y$ be the cofactor matrix of $Y$. Furthermore, let $\gamma$ and $\delta$ index the longest row and column of $\tilde Y$ respectively. 
Then the following inequalities hold:
\be
\frac{y_3^2y_2^2}{y_3 y_2+Zy_3y_1}\leq \frac{[(\tilde Y\tilde Y^\dagger)_{\gamma\gamma}(\tilde Y^\dagger \tilde Y)_{\delta\delta}]^\frac{1}{2}}{|\tilde Y_{\gamma\delta}|}
\leq \frac{(y_3y_2)^2+Z(y_3y_1)^2}{y_3y_2-Zy_3y_1}\,,
\ee
where $Z\equiv \frac{2y_2^2-y_1^2}{y_2^2-2y_1^2}$.
\end{theorem}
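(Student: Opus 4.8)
The plan is to obtain this statement as a direct corollary of Theorem \ref{thm:sv3} applied to the cofactor matrix $\tilde Y$, which is itself just another complex $3\times 3$ matrix. The crucial observation is that the left-hand side of the claimed inequality is literally the left-hand side of eq.~(\ref{eq:thmyt}) with $Y$ replaced by $\tilde Y$ and $(\alpha,\beta)\to(\gamma,\delta)$: indeed $(\tilde Y\tilde Y^\dagger)_{\gamma\gamma}$ is the squared length of the longest row of $\tilde Y$, $(\tilde Y^\dagger\tilde Y)_{\delta\delta}$ that of its longest column, and $|\tilde Y_{\gamma\delta}|$ the common element. So nothing needs to be re-derived; one only has to feed the correct singular values into the already-proven bounds.

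First I would identify the singular values of $\tilde Y$. From eq.~(\ref{eq:svdtilde}) (equivalently, from $\tilde Y=\det(Y)\,(Y^{-1})\tra$ for invertible $Y$, with the degenerate case following by continuity), the singular values of $\tilde Y$ are the pairwise products of those of $Y$, namely $y_1y_2$, $y_1y_3$, $y_2y_3$. Since $y_1\leq y_2\leq y_3$, their increasing order is $y_1y_2\leq y_1y_3\leq y_2y_3$, so in the notation of Theorem \ref{thm:sv3} the largest and second-largest singular values of $\tilde Y$ are $m_3=y_2y_3$ and $m_2=y_1y_3$.

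The key step is that the constant $X$ of Theorem \ref{thm:sv3}, when evaluated for $\tilde Y$, collapses to $Z$, because a common factor $y_3^2$ cancels:
\be
\tilde X=\frac{2m_3^2-m_2^2}{m_3^2-2m_2^2}=\frac{2(y_2y_3)^2-(y_1y_3)^2}{(y_2y_3)^2-2(y_1y_3)^2}=\frac{2y_2^2-y_1^2}{y_2^2-2y_1^2}=Z\,.
\ee
Substituting $m_3=y_2y_3$, $m_2=y_1y_3$ and $\tilde X=Z$ into the lower and upper bounds of eq.~(\ref{eq:thmyt}) then reproduces the claimed $y_3^2y_2^2/(y_3y_2+Zy_3y_1)$ and $\big((y_3y_2)^2+Z(y_3y_1)^2\big)/(y_3y_2-Zy_3y_1)$ verbatim. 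Equivalently, one may mirror the proof of Theorem \ref{thm:sv3} line by line, expanding the ratio in terms of the SVD (\ref{eq:svdtilde}) of $\tilde Y$ and invoking Lemma \ref{lem:basic2} in place of Lemma \ref{lem:basic} to bound the numerator factors and, via the triangle inequality, the denominator.

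I do not expect a genuine obstacle: the only points needing care are the bookkeeping of the singular values of the cofactor matrix and the short verification that $\tilde X=Z$. The one subtlety is that eq.~(\ref{eq:svdtilde}) lists the singular values of $\tilde Y$ in decreasing rather than increasing order, so a trivial relabeling is required before quoting Theorem \ref{thm:sv3}. This permutation merely reshuffles the columns of the unitary factors and leaves the intrinsic quantities $(\tilde Y\tilde Y^\dagger)_{\gamma\gamma}$, $(\tilde Y^\dagger\tilde Y)_{\delta\delta}$ and $|\tilde Y_{\gamma\delta}|$ untouched, so it has no effect on either side of the inequality.
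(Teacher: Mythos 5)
Your proof is correct, and it takes a genuinely different route from the paper's. The paper disposes of this theorem with the remark that ``the proof is identical to Theorem \ref{thm:sv3}'', i.e.\ it re-runs the same argument in parallel: expand the ratio using the SVD of $\tilde Y$ from eq.~(\ref{eq:svdtilde}) and invoke Lemma \ref{lem:basic2} (rather than Lemma \ref{lem:basic}) to bound the numerator factors and, via the triangle inequality, the denominator. You instead obtain the result as a literal corollary of Theorem \ref{thm:sv3} applied to the matrix $\tilde Y$ itself: the singular values of the cofactor matrix are the pairwise products $y_1y_2\leq y_1y_3\leq y_2y_3$, the quantities $(\tilde Y\tilde Y^\dagger)_{\gamma\gamma}$, $(\tilde Y^\dagger\tilde Y)_{\delta\delta}$, $|\tilde Y_{\gamma\delta}|$ are intrinsic to $\tilde Y$ and insensitive to how one orders the SVD, and the constant of Theorem \ref{thm:sv3} collapses as $\tilde X=\frac{2(y_2y_3)^2-(y_1y_3)^2}{(y_2y_3)^2-2(y_1y_3)^2}=Z$ (for $y_3\neq 0$; the case $y_3=0$ is degenerate for both formulations, since then $\tilde Y=0$ and the ratio is undefined). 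Your reduction buys economy and transparency: Lemma \ref{lem:basic2} is not needed at all for this theorem, and the substitutions $y_3\to y_2y_3$, $y_2\to y_1y_3$, $X\to Z$ explain at a glance why the bounds take the form they do. The paper's parallel re-derivation, by contrast, is the template that generalizes to statements (such as Theorem \ref{thm:master0}) where no black-box reduction to an earlier theorem is available; for the present statement, however, your argument is the cleaner one, and your closing remark correctly identifies the only point of care, namely the relabeling of eq.~(\ref{eq:svdtilde})'s decreasing-order singular values into the increasing-order convention of Theorem \ref{thm:sv3}.
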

\begin{proof}
The proof is identical to Theorem \ref{thm:sv3}.
\end{proof}

The following Theorem defines an approximation $U'_L$ for the matrix $U_L$, and bounds its error in terms of the deviation of the matrix $U_L^\dagger U'^{}_L$ from the identity matrix.
\begin{theorem}
\label{thm:master0}
Let Y be a complex 3 by 3 matrix,  let $Y=U_L^{}\diag(y_1,y_2,y_3)U_R^\dagger$ be its SVD with $y_1\leq y_2\leq y_3$, and let $\tilde Y$ be the cofactor matrix of $Y$. Furthermore, let $\beta$ and $\delta$ be the indices of the longest columns of $Y$ and $\tilde Y$ respectively. Define the matrix $U_L'$ via
\be
U'_{Li3}\equiv \frac{Y_{i\beta}}{\bigl[(Y^\dagger Y)_{\beta\beta}\bigr]^\frac{1}{2}}\,,
\qquad
U'_{Li1}\equiv \frac{\tilde Y^*_{i\delta}}{\bigl[(\tilde Y^\dagger \tilde Y)_{\delta\delta}\bigr]^\frac{1}{2}}\,,
\qquad
U'_{Li2}\equiv \frac{(\epsilon_{ijk}U'_{L\,j3}U'_{L\,k1})^*}{\bigl[1-|U'^*_{L\,i3}U'^{}_{L\,i1}|^2\bigr]^\frac{1}{2}}\,.
\ee
 Finally, let $V_L\equiv U^\dagger_L U'^{}_L$.
Then the following inequalities hold:
\be
\begin{pmatrix}
1-|V_{L11}|^2	&	|V_{L12}|^2	&	|V_{L13}|^2\\
|V_{L21}|^2		&	1-|V_{L22}|^2&	|V_{L23}|^2\\
|V_{L31}|^2		&	|V_{L32}|^2	&	1-|V_{L33}|^2
\end{pmatrix}
\leq \begin{pmatrix}
Z(\frac{y_1}{y_2})^{2}	& 	Z'(\frac{y_1}{y_2})^{2}	&	X(\frac{y_1}{y_3})^{2}\\
Z(\frac{y_1}{y_2})^{2}	&	Z'(\frac{y_1}{y_2})^{2}+X'(\frac{y_2}{y_3})^{2}	&	X(\frac{y_2}{y_3})^{2}\\
Z(\frac{y_1}{y_3})^{2}	&	X'(\frac{y_2}{y_3})^{2}	&	X(\frac{y_2}{y_3})^{2}
\end{pmatrix}\,,
\label{eq:master0}
\ee
with $X\equiv \frac{2y_3^2-y_2^2}{y_3^2-2y_2^2}$, $Z\equiv \frac{2y_2^2-y_1^2}{y_2^2-2y_1^2}$, $X'\equiv (1-\frac{9y_1^2}{y_3^2})^{-1}X$, $Z'\equiv (1-\frac{9y_1^2}{y_3^2})^{-1}Z$.
\end{theorem}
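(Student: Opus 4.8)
The plan is to work entirely in the singular basis of $Y$, where the three columns of the trial matrix $U'_L$ become explicit. Writing $Y=U_L\diag(y_1,y_2,y_3)U_R^\dagger$ and using the SVD of the cofactor matrix in eq.~(\ref{eq:svdtilde}), the $\beta$-th column of $Y$ and the $\delta$-th column of $\tilde Y^*$ are linear combinations of the columns of $U_L$ with coefficients $y_a\overline{(U_R)_{\beta a}}$ and $\tilde y_a\overline{(U_R)_{\delta a}}$, where $\tilde y=(y_2y_3,y_1y_3,y_1y_2)$. Hence the first and third columns of $V_L=U_L^\dagger U'_L$ are, up to an overall normalization, just the rescaled rows $(U_R)_{\beta\,\cdot}$ and $(U_R)_{\delta\,\cdot}$; explicitly $|V_{La3}|^2=y_a^2|(U_R)_{\beta a}|^2/(Y^\dagger Y)_{\beta\beta}$ and $|V_{La1}|^2=\tilde y_a^2|(U_R)_{\delta a}|^2/(\tilde Y^\dagger\tilde Y)_{\delta\delta}$. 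For the middle column I would first establish the transformation rule $V_{La2}=\overline{\det U_L}\,\overline{(V_{L\cdot 3}\times V_{L\cdot 1})_a}/N$, which follows from the determinant identity $\epsilon_{ijk}U_{Lia}U_{Ljb}U_{Lkc}=\det(U_L)\epsilon_{abc}$ applied to the definition of $U'_{Li2}$.

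The first and third columns of the bound matrix are then immediate. Since each denominator $(Y^\dagger Y)_{\beta\beta}$, $(\tilde Y^\dagger\tilde Y)_{\delta\delta}$ dominates its largest single term, the off-diagonal entries $|V_{La3}|^2$ (for $a\ne3$) and $|V_{La1}|^2$ (for $a\ne1$), as well as the defects $1-|V_{L33}|^2$ and $1-|V_{L11}|^2$, are bounded by the ratio inequalities of Lemma \ref{lem:basic} (applied to the longest column $\beta$ of $Y$) and Lemma \ref{lem:basic2} (applied to the longest column $\delta$ of $\tilde Y$), with $y_1\le y_2$ used to merge the two smaller contributions. This reproduces the unprimed $X$- and $Z$-entries in columns one and three.

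For the middle column I would use two facts. First, $(U_L')^\dagger U'_L$ has unit diagonal because the three columns of $U'_L$ are unit vectors, so every column of $V_L$ has unit norm; in particular $1-|V_{L22}|^2=|V_{L12}|^2+|V_{L32}|^2$, which produces the $(2,2)$-entry once the two off-diagonal middle entries are controlled. Second, the overlap $s\equiv V_{L\cdot 3}^\dagger V_{L\cdot 1}$ governs the non-unitarity: because $y_a\tilde y_a=y_1y_2y_3$ is independent of $a$, unitarity of $U_R$ gives $s\propto (U_RU_R^\dagger)_{\beta\delta}=\delta_{\beta\delta}$. Thus if $\beta\ne\delta$ then $s=0$, $U'_L$ is exactly unitary, and $V_L$ being unitary lets me bound $|V_{L12}|^2\le1-|V_{L11}|^2$ and $|V_{L32}|^2\le1-|V_{L33}|^2$ directly from the row norms, giving even the unprimed bounds. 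If instead $\beta=\delta$, the crucial simplification is that the two terms in each middle component $(V_{L\cdot 3}\times V_{L\cdot 1})_a$ carry the same product $(U_R)_{\beta i}(U_R)_{\beta j}$ and merge into a single term proportional to $(y_3^2-y_2^2)$ or $(y_2^2-y_1^2)$; bounding this via the Lemma ratios $|(U_R)_{\beta2}|^2/|(U_R)_{\beta3}|^2\le X$ and $|(U_R)_{\beta2}|^2/|(U_R)_{\beta1}|^2\le Z$ and dividing by $N^2=1-|s|^2$ yields the primed quantities.

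The main obstacle is precisely this normalization in the $\beta=\delta$ case, i.e.\ handling the fact that $U'_L$ need not be unitary, which is the new ingredient compared with the earlier unitary-only derivation. Concretely I must show $N^2\ge 1-9(y_1/y_3)^2$, equivalently $|s|^2=(y_1y_2y_3)^2/[(Y^\dagger Y)_{\beta\beta}(\tilde Y^\dagger\tilde Y)_{\beta\beta}]\le 9(y_1/y_3)^2$. This reduces to minimizing the product of the two quadratic forms $(Y^\dagger Y)_{\beta\beta}$ and $(\tilde Y^\dagger\tilde Y)_{\beta\beta}$ in the weights $|(U_R)_{\beta i}|^2$ subject to the Lemma \ref{lem:basic} and \ref{lem:basic2} constraints on the common row $\beta$; the binding configuration is the near-tribimaximal one with $|(U_R)_{\beta i}|^2\to\frac13$, where the leading terms give exactly the factor $9$. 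Extracting this constant cleanly, rather than the looser $(X+1)(Z+1)$ obtained by bounding the two forms separately, is the delicate step, after which combining the column-norm identity with the merged cross-product estimates assembles the full matrix inequality.
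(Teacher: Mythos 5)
Your construction of $U'_L$, your treatment of the first and third columns of $V_L$ (singular basis, Lemma \ref{lem:basic} applied to the longest column $\beta$ of $Y$, Lemma \ref{lem:basic2} applied to the longest column $\delta$ of $\tilde Y$, merging the two subleading terms via $y_1\le y_2$) coincide with the paper's proof. Your route through the middle column also works and is essentially equivalent to the paper's: the paper converts $V_{L\,\cdot 2}\propto (V_{L\,\cdot 3}\times V_{L\,\cdot 1})^*/N$ into the algebraic identity $|V_{L\,i2}|^2=\bigl[(1-|V_{L\,i1}|^2)(1-|V_{L\,i3}|^2)-|\epsilon-V^*_{L\,i3}V^{}_{L\,i1}|^2\bigr]/(1-|\epsilon|^2)$ and recycles the already-proven column bounds, while you expand the cross product explicitly and use $y_a\tilde y_a=y_1y_2y_3$ to merge terms; both give the same entries, and your identification $s=(U_L'^\dagger U_L')_{31}\propto\det Y\,\delta_{\beta\delta}$ and the resulting case split $\beta\neq\delta$ (exact unitarity, unprimed bounds) versus $\beta=\delta$ is exactly the dichotomy behind the paper's factor $\delta_{\beta\delta}$.

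The genuine gap is the normalization bound $|s|^2\le 9\,y_1^2/y_3^2$, i.e.\ $N^2\ge 1-9y_1^2/y_3^2$, which is precisely what produces the primed constants $X'$, $Z'$ in eq.~(\ref{eq:master0}); without it the theorem as stated is not established. You correctly single this out as the crucial step, but you do not prove it: you reduce it to minimizing $(Y^\dagger Y)_{\beta\beta}(\tilde Y^\dagger\tilde Y)_{\beta\beta}$ over the weights $|U_{R\beta a}|^2$ subject to the Lemma \ref{lem:basic} and \ref{lem:basic2} constraints, assert that the binding configuration is the tribimaximal one, and call the extraction of the constant $9$ ``the delicate step.'' That optimization is genuinely nontrivial (the Lemma constraints are strictly weaker than the longest-column property, allowing e.g.\ $|U_{R\beta 3}|^2=\frac{1}{X+1}<\frac13$ and $|U_{R\beta 1}|^2=\frac{1}{Z+1}<\frac13$ simultaneously, so one would have to analyze a product of two linear forms over a polytope), and you never carry it out. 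Moreover, your premise that ``bounding the two forms separately'' can only yield the looser $(X+1)(Z+1)$ is a misconception: the paper bounds the two forms separately and gets exactly $9$, because the right separate bound is not the Lemma corner value but the trace bound that underlies Lemma \ref{lem:basic} in the first place. Explicitly: since $\beta$ indexes the longest column of $Y$, $(Y^\dagger Y)_{\beta\beta}\ge\frac13\tr Y^\dagger Y\ge\frac13 y_3^2$; since $\delta$ indexes the longest column of $\tilde Y$, $(\tilde Y^\dagger\tilde Y)_{\delta\delta}\ge\frac13\tr\tilde Y^\dagger\tilde Y\ge\frac13 y_2^2y_3^2$; and since $Y\tra\tilde Y=(\det Y)\,\one$, one has $|(Y^\dagger\tilde Y^*)_{\beta\delta}|=y_1y_2y_3\,\delta_{\beta\delta}$, whence $|s|^2\le (y_1y_2y_3)^2/\bigl(\tfrac13 y_3^2\cdot\tfrac13 y_2^2y_3^2\bigr)=9y_1^2/y_3^2$. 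This two-line argument replaces your entire constrained-minimization program; with it, the rest of your proposal goes through.
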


\begin{proof}
Starting with the third column of eq.~(\ref{eq:master0}), we have 
\be
|V_{L\,i3}|^2
=\frac{y_i^2}{y_3^2}\frac
{\frac{| U_{R\beta i}|^2}{|U_{R\beta 3}|^2  }}
{1+\frac{| U_{R\beta 2}|^2}{|U_{R\beta 3}|^2}\frac{y_2^2}{y_3^2}+\frac{| U_{R\beta 1}|^2}{|U_{R\beta 3}|^2}\frac{y_1^2}{y_3^2}}\,,
\ee
where  we have used the SVD of $Y$.
For $i=1,2$ use Lemma \ref{lem:basic} to get
\be
|V_{Li3}|^2\leq X\frac{y_i^2}{y_3^2}\,.
\ee
For $i=3$, again by use of Lemma \ref{lem:basic},
\be
1-|V_{L33}|^2=\frac
{\frac{| U_{R\beta 2}|^2}{|U_{R\beta 3}|^2}\frac{y_2^2}{y_3^2}+\frac{| U_{R\beta 1}|^2}{|U_{R\beta 3}|^2}\frac{y_1^2}{y_3^2}}
{1+\frac{| U_{R\beta 2}|^2}{|U_{R\beta 3}|^2}\frac{y_2^2}{y_3^2}+\frac{| U_{R\beta 1}|^2}{|U_{R\beta 3}|^2}\frac{y_1^2}{y_3^2}}
\leq X\frac{y_2^2}{y_3^2}\,,
\ee
which completes the proof of the third column of eq.~(\ref{eq:master0}). To prove the inequalities of the first column, write it by use of the SVD of $\tilde Y$ as
\be
|V_{L\,i1}|^2
=\frac{y_1^{2}}{y_i^{2}}\frac
{\frac{| U_{R\delta i}|^2}{|U_{R\delta 1}|^2  }}
{1+\frac{| U_{R\delta 2}|^2}{|U_{R\delta 1}|^2}\frac{y_1^2}{y_2^2}+\frac{| U_{R\delta 3}|^2}{|U_{R\delta 1}|^2}\frac{y_1^2}{y_3^2}}\,.
\ee
Applying Lemma \ref{lem:basic2} and using identical reasoning as before we get the bounds in the first column of eq.~(\ref{eq:master0}).
Finally, let us consider the second column of $V_L$.
By defining $\epsilon\equiv (V_L^\dagger V^{}_L)_{31}=(U'^\dagger_L U_L'^{})_{31}$ we can conveniently rewrite 
\be
|V_{L\,i2}|^2=\frac{1-|V_{L\,i1}|^2-|V_{L\,i3}|^2+|V_{L\,i1}|^2|V_{L\,i3}|^2-|\epsilon - V^*_{L\,i3}V^{}_{L\,i1}|^2}{1-|\epsilon|^2}
\ee
This equation can be used to bound the $V_{L\,12}$ and $V_{L\,32}$ elements by use of the already verified bounds, as well as $|V_{L\,11}|\leq 1$ (which follows from unitarity of the first column of $V_L$).
\be
|V_{L\,12}|^2\leq 
\frac{1-|V_{L\,11}|^2-|V_{L\,13}|^2+|V_{L\,11}|^2|V_{L\,13}|^2}{1-|\epsilon|^2}
\leq
Z\frac{y_1^2}{y_2^2}\frac{1}{1-|\epsilon |^2}\,.
\label{eq:V12}
\ee
Simlarly, 
\be
|V_{L\,32}|^2\leq 
\frac{1-|V_{L\,31}|^2-|V_{L\,33}|^2+|V_{L\,31}|^2|V_{L\,33}|^2}{1-|\epsilon|^2}
\label{eq:V32}
\leq
X\frac{y_2^2}{y_3^2}\frac{1}{1-|\epsilon |^2}\,.
\ee
For $V_{L\,22}$, it is simplest to use unitarity of the second column of  $V_L$ as well as eq.~(\ref{eq:V12}) and (\ref{eq:V32}) to get
\be
1-|V_{L\,22}|^2=|V_{L\,12}|^2+|V_{L\,32}|^2\leq
\left(Z\frac{y_1^2}{y_2^2}+X\frac{y_2^2}{y_3^2}\right)\frac{1}{1-|\epsilon |^2}\,.
\ee
Finally, we must obtain an upper bound on $|\epsilon|$:
\be
|\epsilon|^2
=\frac{|(Y^\dagger \tilde Y^*)_{\beta\delta}|^2}{(Y^\dagger Y)_{\beta\beta}(\tilde Y^\dagger\tilde Y)_{\delta\delta} }
\leq\frac{(y_1y_2y_3)^2\delta_{\beta\delta}}
{\left(\frac{1}{3}y_3^2\right)\left(\frac{1}{3}y_3^2y_2^2 \right)}\,,
\ee
and hence
\be
|\epsilon|\leq 3\,\frac{y_1}{y_3}\,\delta_{\beta\delta}\,,
\ee
which completes the proof of  the Theorem.
\end{proof}


\begin{theorem}
Let Y be a complex 3 by 3 matrix,  let $Y=U_L^{}\diag(y_1,y_2,y_3)U_R^\dagger$ be its SVD with $y_1\leq y_2\leq y_3$, and let $\tilde Y$ be the cofactor matrix of $Y$. Furthermore, let $\alpha$ and $\gamma$ be the indices of the longest rows of $Y$ and $\tilde Y$ respectively. Define the matrix $U_R'$ via
\be
U'_{Ri3}\equiv \frac{Y^*_{\alpha i}}{\bigl[(YY^\dagger)_{\alpha\alpha}\bigr]^\frac{1}{2}}\,,
\qquad
U'_{Ri1}\equiv \frac{\tilde Y_{\gamma i}}{\bigl[(\tilde Y \tilde Y^\dagger)_{\gamma\gamma}\bigr]^\frac{1}{2}}\,,
\qquad
U'_{Ri2}\equiv \frac{(\epsilon_{ijk}U'_{R\,j3}U'_{R\,k1})^*}{\bigl[1-|U'^*_{R\,i3}U'^{}_{R\,i1}|^2\bigr]^\frac{1}{2}}\,,
\ee
Finally, let $V_R\equiv U^\dagger_R U'^{}_R$. Then the inequalities (\ref{eq:master0}) hold with the substitution $L\to R$.
\end{theorem}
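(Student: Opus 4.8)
The plan is to deduce the right-handed statement from the left-handed Theorem~\ref{thm:master0} by applying that theorem to the conjugate transpose $W\equiv Y^\dagger$, rather than by repeating the entire argument. The guiding observation is that replacing $Y$ by $Y^\dagger$ interchanges the two unitary factors of the SVD while leaving the singular values unchanged, so that the whole right-hand side of eq.~(\ref{eq:master0}) is untouched.

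First I would record the SVD of $W$: from $Y=U_L\diag(y_1,y_2,y_3)U_R^\dagger$ one obtains $W=Y^\dagger=U_R\diag(y_1,y_2,y_3)U_L^\dagger$, so the left unitary of $W$ is $U_R$ and its right unitary is $U_L$. I would then establish the cofactor identity $\widetilde{Y^\dagger}=\tilde Y^\dagger$, which follows from $\operatorname{adj}(A^\dagger)=\operatorname{adj}(A)^\dagger$ together with $\tilde A=\operatorname{adj}(A)\tra$. These two facts fix the index dictionary needed to invoke Theorem~\ref{thm:master0} for $W$: the longest column of $W$ is the longest row of $Y$ (index $\alpha$), and, because the $j$th column of $\tilde Y^\dagger$ is the complex conjugate of the $j$th row of $\tilde Y$ and hence has the same norm, the longest column of $\tilde W$ is the longest row of $\tilde Y$ (index $\gamma$). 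Thus the indices called $\beta$ and $\delta$ in Theorem~\ref{thm:master0} become $\alpha$ and $\gamma$ here.

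The central verification is that the matrix $U_L'$ that Theorem~\ref{thm:master0} builds from the input $W$ coincides exactly with the $U_R'$ defined in the present statement. For the third column, $W_{i\alpha}/[(W^\dagger W)_{\alpha\alpha}]^{1/2}=Y^*_{\alpha i}/[(YY^\dagger)_{\alpha\alpha}]^{1/2}$, using $(Y^\dagger)_{i\alpha}=Y^*_{\alpha i}$ and $(W^\dagger W)_{\alpha\alpha}=(YY^\dagger)_{\alpha\alpha}$, which is precisely $U'_{Ri3}$. For the first column, $\tilde W^*=(\tilde Y^\dagger)^*=\tilde Y\tra$ and $(\tilde W^\dagger\tilde W)_{\gamma\gamma}=(\tilde Y\tilde Y^\dagger)_{\gamma\gamma}$, so the recipe of Theorem~\ref{thm:master0} yields $\tilde Y_{\gamma i}/[(\tilde Y\tilde Y^\dagger)_{\gamma\gamma}]^{1/2}=U'_{Ri1}$. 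Since in both constructions the second column is the normalized cross product of the third and first columns, it matches automatically. I expect this bookkeeping to be the main obstacle: every complex conjugation must be tracked carefully so that the object produced is literally $U_R'$ and not an inadvertent conjugate of it.

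Finally I would assemble the conclusion. Having identified the matrix $U_L'$ built by Theorem~\ref{thm:master0} for $W$ with $U_R'$, and the corresponding left unitary with $U_R$, the deviation matrix of that theorem is $V_L^{(W)}=U_R^\dagger U_R'=V_R$. Since $X$, $Z$, $X'$, $Z'$ depend only on $y_1,y_2,y_3$, which $Y$ and $Y^\dagger$ share, the bound eq.~(\ref{eq:master0}) established for $V_L^{(W)}$ is verbatim the asserted bound for $V_R$, i.e.\ eq.~(\ref{eq:master0}) with $L\to R$, completing the proof.
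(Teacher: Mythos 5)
Your argument is correct, but it is not the route the paper takes: the paper's proof of this theorem is literally ``identical to the one of Theorem~\ref{thm:master0}'', i.e.\ it re-runs the whole derivation with rows in place of columns, using the row halves of Lemmas~\ref{lem:basic} and~\ref{lem:basic2} and the SVDs written in row form. You instead obtain the right-handed statement as a formal corollary of the left-handed one applied to $W\equiv Y^\dagger$, and every step of that reduction checks out: $W=U_R^{}\diag(y_1,y_2,y_3)U_L^\dagger$ is a valid SVD with the same ordered singular values; the cofactor matrix satisfies $\widetilde{Y^\dagger}=\tilde Y^\dagger$ (for invertible $Y$ this follows from $\operatorname{adj}(Y^\dagger)=\det(Y^\dagger)(Y^\dagger)^{-1}=\operatorname{adj}(Y)^\dagger$ together with $\tilde Y=\operatorname{adj}(Y)\tra$, and the singular case follows by continuity); columns of $W$ and $\tilde W$ have the same norms as the corresponding rows of $Y$ and $\tilde Y$, so the column indices $\beta,\delta$ of Theorem~\ref{thm:master0} become exactly $\alpha,\gamma$; and the entrywise identities $W_{i\alpha}=Y^*_{\alpha i}$, $(\tilde W^*)_{i\gamma}=\tilde Y_{\gamma i}$, $(W^\dagger W)_{\alpha\alpha}=(YY^\dagger)_{\alpha\alpha}$, $(\tilde W^\dagger \tilde W)_{\gamma\gamma}=(\tilde Y\tilde Y^\dagger)_{\gamma\gamma}$ show that the $U_L'$ built from $W$ is literally $U_R'$, the second column matching automatically because it is the same function of columns one and three. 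Hence the deviation matrix of Theorem~\ref{thm:master0} for $W$ is $U_R^\dagger U_R'=V_R$, and since $X,Z,X',Z'$ depend only on the singular values, the bound~(\ref{eq:master0}) transfers verbatim with $L\to R$. As for what each approach buys: the paper's repetition is self-contained and manifestly symmetric, but asks the reader to re-verify every inequality; your reduction proves the theorem with no new estimates at all, only conjugation bookkeeping, and it makes explicit the $Y\to Y^\dagger$ duality that is the real reason the ``identical proof'' works --- arguably the cleaner formulation, provided (as you note) the conjugations are tracked carefully.
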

\begin{proof}
The proof is identical to the one of Theorem \ref{thm:master0}.
\end{proof}

\bibliography{paper}

\bibliographystyle{JHEP}

\end{document}